\documentclass[a4paper,11pt]{article} 

\usepackage{fullpage}
\usepackage{times}
\usepackage{soul}
\usepackage{url}
\usepackage[utf8]{inputenc}
\usepackage[small]{caption}
\usepackage{graphicx}
\usepackage{xcolor}
\usepackage{amsmath}
\usepackage{mathtools}
\mathtoolsset{showonlyrefs = true}
\usepackage{booktabs}
\usepackage{algorithm}
\usepackage{enumerate}
\usepackage{dsfont}
\usepackage{cleveref}
\usepackage{comment}

\usepackage{amsthm}
\usepackage{amssymb}

\newtheorem{theorem}{Theorem}
\newtheorem{corollary}{Corollary}
\newtheorem{proposition}{Proposition}

\newtheorem{claim}{Claim}

\theoremstyle{definition}
\newtheorem{definition}{Definition}
\newtheorem{example}{Example}

\usepackage{natbib}
\usepackage{authblk}

\allowdisplaybreaks

\title{\bf Decomposition Envy-Freeness in Random Assignment}

\author[1]{Yasushi Kawase}
\author[2]{Warut Suksompong}
\author[3]{Hanna Sumita}
\author[3]{Yu Yokoi}

\affil[1]{Chuo University, Japan}
\affil[2]{National University of Singapore, Singapore}
\affil[3]{Institute of Science Tokyo, Japan}

\date{\vspace{-10mm}}

\newenvironment{claimproof}{\par
  \pushQED{$\hfill \lhd$}%
  \normalfont \topsep 6pt\relax
  \trivlist
  \item[\hskip\labelsep
        {\textit{Proof (Claim).}}]\ignorespaces
}{%
  \popQED\endtrivlist
}

\begin{document}

\maketitle

\begin{abstract}
In random assignment, fairness is often captured by \emph{stochastic-dominance envy-freeness (SD-EF)}. We observe that assignments satisfying SD-EF may admit decompositions that result in each agent envying another agent with high probability.
To address this, we introduce \emph{decomposition envy-freeness (Dec-EF)}, which is a property of a decomposition rather than of an assignment matrix. We show that an SD-EF assignment matrix always admits a Dec-EF decomposition when there are at most three agents or the agents have at most two distinct preferences.
\end{abstract}

\section{Introduction}
\label{sec:intro}

The assignment problem concerns allocating indivisible objects to agents based on their preferences.
Example applications include allocating school seats to students, public houses to tenants, and job positions to workers.
Since it is often impossible to be completely fair to the agents---for example, even if all agents share the same preference over objects, only one agent can get the best object, one agent can get the second-best object, and so on---randomness is frequently used in order to promote fairness.

In the classic setting of random assignment popularized by \citet{BogomolnaiaMo01}, there is the same number of agents and objects, and each agent submits a strict ordinal preference over the objects.
The goal is to compute a random assignment, which is a probability distribution over deterministic assignments, satisfying desirable properties.
\citeauthor{BogomolnaiaMo01} examined two assignment rules: \emph{random priority (RP)} and \emph{probabilistic serial (PS)}.
They showed that the output of PS always satisfies a fairness notion defined based on the \emph{stochastic dominance (SD)} relation called \emph{SD-envy-freeness (SD-EF)}.
This notion requires that for any pair of agents $i, i'$ and any object~$o$, the probability that agent~$i$ receives an object that is at least as preferable as~$o$ (according to the preference of agent~$i$) is no less than the corresponding probability of agent~$i'$ (still according to the preference of agent~$i$).
On the other hand, the output of RP sometimes fails SD-EF, but always satisfies a weaker version of it.
Other properties in random assignment, such as efficiency and strategyproofness, have also been studied by \citeauthor{BogomolnaiaMo01} along with numerous subsequent authors \citep{McLennan02,AbdulkadirogluSo03,KattaSe06,KojimaMa10,MennleSe21,Duddy25}.

Observe that the definition of SD-EF depends only on the ``assignment matrix'', which indicates the probability that each agent receives each object, and not on the distribution over deterministic assignments.
For example, PS outputs an assignment matrix and does not naturally suggest which distribution over deterministic assignments to use.
The focus on assignment matrices is typically justified by the Birkhoff--von Neumann theorem, which states that every bistochastic matrix (i.e., a square matrix with nonnegative entries such that the entries in each row or column of the matrix sum to~$1$) can be written as a convex combination of permutation matrices; note that an assignment matrix is always bistochastic.
Therefore, each assignment matrix can be decomposed into a probability distribution over deterministic assignments.
This fact led \citet{BogomolnaiaMo01} to write that ``two probability distributions over [deterministic assignments] resulting in the same bistochastic matrix will not be distinguished''.
Although the probability distribution over deterministic assignments is generally not unique, to the best of our knowledge, other studies of random assignment have also ignored the choice of which distribution to implement.\footnote{Some studies of allocation problems have considered ex-ante and ex-post fairness \citep{BudishChKo13,AzizFrSh24}.}

Despite the overwhelming focus on assignment matrices in the literature, we observe that the choice of distribution over deterministic assignments can play a crucial role as far as envy-freeness is concerned on an intuitive level.
This is illustrated by the following example.
\begin{example}
\label{ex:cyclic}
Consider an instance with three agents, where every agent has the preference $a \succ b\succ c$ over the three objects.
Clearly, the uniform assignment matrix that gives each object to each agent with probability~$1/3$ satisfies SD-EF.
However, consider the decomposition that produces the assignment $(1{:}a, 2{:}b, 3{:}c)$ with probability~$1/3$, the assignment $(1{:}b, 2{:}c, 3{:}a)$ with probability~$1/3$, and the assignment $(1{:}c, 2{:}a, 3{:}b)$ with probability~$1/3$.
Then, the probability that agent~$2$ envies agent~$1$ in the chosen deterministic assignment is $2/3$.
By contrast, for the decomposition that puts probability~$1/6$ on each of the six deterministic assignments, any agent envies any other agent with probability only $1/2$.
\end{example}
With more agents and objects, a direct generalization of Example~\ref{ex:cyclic} (with the ``cyclic'' decomposition) implies that in a decomposition of an SD-EF assignment matrix, it is possible that one agent envies another agent with probability arbitrarily close to~$1$.
This undermines the intended fairness of the allocation process, e.g., if agents are aware of the chosen decomposition, or if the process is repeated several times and an agent notices that she envies a particular agent almost every time.\footnote{\citet{Duddy25} offered a different critique of SD-EF by arguing that a random assignment satisfying it can be unfair from an egalitarian perspective.}

To address this issue, we propose a new envy-freeness notion, \emph{decomposition envy-freeness (Dec-EF)}, which is a property of a distribution over deterministic assignments (rather than of an assignment matrix).
Dec-EF demands that for any pair of agents $i, i'$, the probability that $i$ envies $i'$ in a deterministic assignment drawn from the distribution is at most $1/2$.
In Example~\ref{ex:cyclic}, the cyclic decomposition fails Dec-EF, since each agent envies some other agent with probability $2/3$.
Note that this decomposition is also output by the so-called Birkhoff's algorithm as well as other heuristics, which are often aimed at minimizing the number of deterministic assignments in the decomposition \citep{DufosseUc16,DufosseKaPa18}.
On the other hand, the decomposition that puts probability~$1/6$ on each of the six deterministic assignments, which is intuitively fairer, satisfies Dec-EF.

In light of this, a natural question is whether an SD-EF assignment matrix always admits a Dec-EF decomposition---if so, this would mean that for any SD-EF assignment, there is a ``fair'' decomposition that can be used.
We answer this question in the affirmative for two cases: when there are at most three agents, and when the agents have at most two distinct preferences.

\section{Preliminaries}

There is a set~$N$ of $n$~agents and a set~$M$ of $n$~objects, for some positive integer $n\ge 2$.
Each agent $i\in N$ has a strict preference $\succ_i$ over the objects in~$M$.
An \emph{instance} consists of the agents, objects, and agents' preferences.
A \emph{deterministic assignment} is a one-to-one matching between agents and objects.
A \emph{random assignment} is a probability distribution over deterministic assignments.
Given a random assignment, the corresponding \emph{assignment matrix} is an $n\times n$ matrix that indicates the probability with which each agent receives each object.
Note that the assignment matrix is always bistochastic, i.e., the entries are nonnegative and the entries in each row or column sum to~$1$.
By the Birkhoff--von Neumann theorem, for every bistochastic matrix, there exists a random assignment with that matrix as its assignment matrix---we call such a random assignment a \emph{decomposition} of the assignment matrix.
Given an assignment matrix, its decomposition is generally not unique.

A \emph{random allocation} is a probability distribution over the $n$~objects.
Thus, an assignment matrix~$P$ can be viewed as a collection of $n$ random allocations, one for each agent; we use $P_i$ to denote agent~$i$'s random allocation in~$P$.
For an agent $i\in N$, a random allocation $X$ is said to \emph{stochastically dominate} another random allocation~$Y$ with respect to $\succ_i$ if for each object~$o\in M$, the total probability of objects at least as preferable as~$o$ in~$X$ according to~$\succ_i$ is no less than the corresponding probability in~$Y$.
In other words, $\sum_{o':\, o' \succeq_i o} X_{o'} \geq \sum_{o':\, o'\succeq_i o} Y_{o'}$ holds for each object $o$, where $X_{o'}$ and $Y_{o'}$ denote the probabilities of the object $o'$ in $X$ and $Y$, respectively.
For example, if agent~$i$ has the preference $o_1\succ_i o_2\succ_i o_3$, then the random allocation that gives probabilities $(0.8, 0.1, 0.1)$ to $o_1,o_2,o_3$ in this order stochastically dominates the one that gives probabilities $(0.5, 0.3, 0.2)$, but does not stochastically dominate the one that gives probabilities $(0, 1, 0)$.
\begin{definition}
An assignment matrix~$P$ satisfies:
\begin{itemize}
\item \emph{SD-envy-freeness (SD-EF)} if for each pair of agents $i,i'\in N$, the random allocation~$P_i$ of agent~$i$ stochastically dominates the random allocation~$P_{i'}$ of agent~$i'$ with respect to~$\succ_i$;
\item \emph{weak SD-envy-freeness (weak SD-EF)} if for each pair of agents $i,i'\in N$, either $P_i = P_{i'}$ or $P_{i'}$ does not stochastically dominate~$P_i$ with respect to $\succ_i$.
\end{itemize}
\end{definition}

We now introduce our notion of envy-freeness.
Recall that a random assignment refers to a probability distribution over deterministic assignments rather than an assignment matrix.
\begin{definition}
\label{def:Dec-EF}
A random assignment~$\pi$ satisfies \emph{decomposition envy-freeness (Dec-EF)} if for every pair of agents $i,i'\in N$, the probability that agent~$i$ envies agent~$i'$ in a deterministic assignment drawn from~$\pi$ is at most $1/2$.
\end{definition}
Note that $1/2$ is the smallest reasonable number that can be chosen in this definition.
For example, suppose all agents have the same preference.
Then, for any random assignment and any pair of agents, the probability that some agent envies the other agent must be at least $1/2$.
This means that if a number smaller than $1/2$ is chosen in \Cref{def:Dec-EF}, no random assignment can satisfy the definition.

As discussed in \Cref{sec:intro}, it is possible that for an assignment matrix, one decomposition satisfies Dec-EF while another one does not.

\begin{definition}
An assignment matrix satisfies \emph{envy-freeness decomposability (EF-decomposability)} if it admits a decomposition that satisfies Dec-EF.
\end{definition}

The following example shows that a weak SD-EF assignment matrix may not be EF-decomposable.
\begin{example}
\label{ex:weak-SD-EF-Dec-EF}
Consider an instance with $n = 3$ agents whose preferences are $a\succ_1 b\succ_1 c$,\, $a\succ_2 c\succ_2 b$, and $c\succ_3 a\succ_3 b$.
Let $P$ be the assignment matrix
\[ P = 
\begin{bmatrix}
0 & 1 & 0 \\
0.9 & 0 & 0.1 \\
0.1 & 0 & 0.9
\end{bmatrix},
\]
where the rows correspond to agents $1,2,3$ and the columns correspond to objects $a,b,c$, respectively.
One can check that $P$ satisfies weak SD-EF.\footnote{In contrast, $P$ does not satisfy SD-EF, since $P_1$ stochastically dominates neither $P_2$ nor $P_3$ with respect to $\succ_1$.}
However, the only decomposition of $P$ into deterministic assignments is
\[ \pi = 0.9 \cdot 
\begin{bmatrix}
0 & 1 & 0 \\
1 & 0 & 0 \\
0 & 0 & 1
\end{bmatrix} 
+ 0.1 \cdot 
\begin{bmatrix}
0 & 1 & 0 \\
0 & 0 & 1 \\
1 & 0 & 0
\end{bmatrix}.
\]
This decomposition does not satisfy Dec-EF, as agent~$1$ envies agent~$2$ with probability~$0.9$.
\end{example}

Conversely, a Dec-EF random assignment may not correspond to a weak SD-EF assignment matrix either.

\begin{example}
\label{ex:Dec-EF-weak-SD-EF}
Consider an instance with $n = 3$ agents whose preferences are $a\succ_1 b\succ_1 c$,\, $b\succ_2 c\succ_2 a$, and $c\succ_3 a\succ_3 b$.    
The random assignment
\[ \pi = 0.5 \cdot 
\begin{bmatrix}
1 & 0 & 0 \\
0 & 1 & 0 \\
0 & 0 & 1
\end{bmatrix} 
+ 0.5 \cdot 
\begin{bmatrix}
0 & 0 & 1 \\
1 & 0 & 0 \\
0 & 1 & 0
\end{bmatrix}
\]
satisfies Dec-EF, where the rows correspond to agents $1,2,3$ and the columns correspond to objects $a,b,c$, respectively.
However, the assignment matrix corresponding to $\pi$ is
\[  
P = \begin{bmatrix}
0.5 & 0 & 0.5 \\
0.5 & 0.5 & 0 \\
0 & 0.5 & 0.5
\end{bmatrix}.
\]
This assignment matrix fails weak SD-EF, since $P_2$ stochastically dominates $P_1$ with respect to $\succ_1$ and $P_1\ne P_2$.
\end{example}

\Cref{ex:Dec-EF-weak-SD-EF} illustrates that Dec-EF is a relatively weak notion on its own: under the random assignment~$\pi$, agent~$1$ has large envy toward agent~$2$ half of the time, and is only marginally happier in the other half.
However, as \Cref{ex:cyclic} demonstrates, Dec-EF can sometimes help strengthen the fairness provided by SD-EF.

Two important random assignment rules are  \emph{probabilistic serial (PS)} and \emph{random priority (RP)} \citep{BogomolnaiaMo01}.
On the one hand, an assignment matrix produced by PS always satisfies SD-EF; however, the description of PS does not suggest how to decompose this matrix into deterministic assignments.\footnote{Choosing an arbitrary decomposition can result in a random assignment that violates Dec-EF.
For example, given the instance in Example~\ref{ex:cyclic}, PS returns the uniform assignment matrix, and we have seen that some decomposition of this matrix fails Dec-EF.}
On the other hand, RP naturally yields a random assignment based on its description.
Specifically, RP works by choosing a uniformly random permutation of the agents and then, following this order, implementing a \emph{serial dictatorship} in which the agents, one by one, pick their most preferred object among the remaining ones. Thus, each order results in a deterministic assignment.
Even though the assignment matrix output by RP may violate SD-EF, we observe that the resulting random assignment always satisfies Dec-EF.

\begin{proposition}
\label{prop:RP}
For every instance, the random assignment output by RP satisfies Dec-EF.
\end{proposition}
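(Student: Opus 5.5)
Fix a pair of agents $i,i'$. The plan is a pairing argument on orderings. RP picks a uniformly random permutation $\sigma$ of $N$, and each $\sigma$ determines a deterministic assignment via serial dictatorship. Let $\tau$ be the transposition swapping $i$ and $i'$. The map $\sigma\mapsto \tau\circ\sigma$ (exchange the positions of $i$ and $i'$ in the order) is a bijection on permutations. It therefore preserves the uniform distribution and splits the permutations into pairs. It suffices to show that within each pair, agent~$i$ envies agent~$i'$ in at most one of the two resulting assignments. Then the probability that $i$ envies $i'$ is at most $1/2$.

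First, whenever $i$ precedes $i'$ in $\sigma$, agent $i$ does not envy $i'$. The object $i'$ receives is still available when $i$ picks, and $i$ takes her favourite available object, so she weakly prefers her own. Strictness of preferences and distinct objects make this strict preference, hence no envy.

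Second, in each pair $\{\sigma,\tau\circ\sigma\}$, exactly one ordering places $i$ before $i'$. Combining the two facts, at most one ordering in each pair can produce envy from $i$ toward $i'$. Summing over pairs bounds the envy probability by $1/2$. Since $i,i'$ were arbitrary, RP satisfies Dec-EF.

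There is no real obstacle. The only points to state carefully are that the swap is a measure-preserving involution without fixed points (because $i\neq i'$), and that the first fact uses only availability of $i'$'s eventual object at $i$'s turn. One could equally phrase the argument directly: $i$ precedes $i'$ with probability exactly $1/2$, and envy can only occur on the complementary event.
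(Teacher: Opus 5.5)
Your proposal is correct and follows essentially the paper's argument: in exactly half of the orderings $i$ precedes $i'$, and in each such ordering serial dictatorship guarantees that $i$ does not envy $i'$. Your transposition pairing is simply an explicit justification of the ``half of the $n!$ permutations'' count that the paper states directly.
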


\begin{proof}
Consider any pair of agents $i,i'\in N$.
In half of the $n!$ agent permutations, $i$ appears before $i'$.
For each such permutation, $i$ picks an object before~$i'$, so $i$ does not envy~$i'$ in the resulting deterministic assignment.
Therefore, across all permutations, the probability that $i$ envies $i'$ is at most $1/2$.
It follows that Dec-EF is satisfied.
\end{proof}

\section{Main Results}

In this section, we prove our main results, which establish the relation between SD-EF and Dec-EF in two cases.
The first case is when there are at most three agents.

\begin{theorem}
\label{thm:three-agents}
For every instance with $n \le 3$, every SD-EF assignment matrix is EF-decomposable.
\end{theorem}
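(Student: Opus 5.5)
We plan to treat $n=2$ directly and then handle $n=3$ by reducing to a small family of extreme configurations, using the linear-programming structure of the problem.

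\textbf{The case $n=2$.} Here there are only two deterministic assignments, so the decomposition of $P$ is unique.

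If the two agents rank the same object on top, say $a$, then SD-EF gives $P_{1a}\ge P_{2a}$ and $P_{2a}\ge P_{1a}$. Hence $P_{1a}=P_{2a}=1/2$. Each agent envies the other only when the other gets $a$, which happens with probability $1/2$.

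If the top objects differ, an agent envies the other only in the assignment where it receives its second choice. SD-EF gives each agent its top object with probability at least $1/2$, so that assignment has probability at most $1/2$.

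\textbf{The case $n=3$: LP formulation.} Fix an SD-EF matrix $P$. The set of decompositions is a polytope of weight vectors $\lambda$ on the six permutation matrices with $\sum_\sigma \lambda_\sigma \sigma = P$. For each ordered pair $(i,i')$, the envy probability $e_{ii'}(\lambda)=\sum_{\sigma:\, \sigma(i')\succ_i\sigma(i)}\lambda_\sigma$ is linear in $\lambda$.

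Since $3\times3$ bistochastic matrices have a one-parameter (or smaller) family of decompositions, I would parametrize that family explicitly. The family is the intersection of $\lambda\ge0$ with a line: the kernel of the map $\lambda\mapsto\sum\lambda_\sigma\sigma$ is spanned by (even permutations) $-$ (odd permutations). So decompositions are $\lambda^0+t(\mathbf 1_{\text{even}}-\mathbf 1_{\text{odd}})$ for $t$ in an interval $[t_-,t_+]$.

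The task is then to find $t$ in this interval with all six linear constraints $e_{ii'}\le 1/2$ satisfied. By Helly's theorem in one dimension, it suffices to show two things:
\begin{itemize}
\item each single constraint $e_{ii'}\le 1/2$ is satisfiable on the interval;
\item each pair of constraints is simultaneously satisfiable, i.e., the lower and upper bounds on $t$ they induce are compatible.
\end{itemize}

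\textbf{Computing the constraints.} For a pair $(i,i')$, the envy event depends only on which objects $i$ and $i'$ receive. Its probability is
\[
e_{ii'}=\sum_{(o,o')\,:\,o'\succ_i o}\Pr[i\text{ gets } o,\ i'\text{ gets } o'],
\]
and with three objects the pair probabilities are determined by $P$ up to the parameter $t$. Concretely, $\Pr[i\text{ gets }o, i'\text{ gets }o']$ equals the weight of the single permutation sending $i\mapsto o$, $i'\mapsto o'$. So the envy sum is over three permutations, whose parities are determined by $\succ_i$ and the positions.

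Write $\succ_i$ as $x\succ y\succ z$. The envy permutations are those with $(i,i')\mapsto(y,x),(z,x),(z,y)$. Of these, $(y,x)$ and $(z,y)$ are adjacent transpositions relative to $(x,y)$ and $(y,z)$, and have the same parity as each other. The permutation $(z,x)$ has the opposite parity, since composing two transpositions gives a 3-cycle.

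So each constraint has the form $c_{ii'}\pm t\le 1/2$, with net coefficient $\pm1$. I would use SD-EF to show that $c_{ii'}$ plus the relevant endpoint slack is at most $1/2$. For example, $\Pr[i \text{ gets } x]\ge\Pr[i' \text{ gets } x]$ and the top-two condition bound the envy probability by $1/2$ at a suitable value of $t$.

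\textbf{Main obstacle.} The hard step is the pairwise compatibility. Two constraints with opposite signs of $t$ (one an upper bound on $t$, one a lower bound) must not conflict. The pair most likely to conflict is $(i,i')$ together with $(i',i)$, or two agents' constraints toward a common third agent.

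I would try first to show that the sum of two opposite-sign constraints, $c+c'\le 1$, follows from SD-EF inequalities. I expect this to need a case analysis over the preference profiles up to symmetry: whether the two agents share a top object, share the full order, and so on. In the same-preference case, $e_{ii'}+e_{i'i}=1-\Pr[\text{tie}]$ is automatic. In the other cases, summing the SD-EF inequalities of $i$ over $i'$ and of $i'$ over $i$ at the appropriate thresholds should give exactly the needed bound.

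The interval endpoints, coming from $\lambda\ge0$, must also be checked against each constraint. I would handle them similarly, by exhibiting the sign-consistent permutation weights.
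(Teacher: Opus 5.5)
\textbf{There is a genuine gap in the $n=3$ case.} Your $n=2$ argument is correct. Your reduction for $n=3$ is also sound: the decompositions form a segment $\lambda^0+t(\mathbf{1}_{\text{even}}-\mathbf{1}_{\text{odd}})$, each envy probability has slope $\pm1$ in $t$, and by one-dimensional Helly a Dec-EF decomposition exists if and only if every two constraints are compatible. But this reduction is an equivalence, so it carries none of the content of the theorem. All of that content lies in the compatibility inequalities, and you prove none of them. Concretely, you would need:
\begin{itemize}
\item $e_{ii'}+e_{jj'}\le 1$ for every pair of envy constraints with opposite slopes;
\item $e_{ii'}-\lambda_\sigma\le 1/2$ for every permutation $\sigma$ whose weight has the same slope as $e_{ii'}$ (these are your ``endpoint'' checks).
\end{itemize}
Which pairs arise depends on the whole preference profile. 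Many of them mix two different agents' SD-EF conditions with the column sums, so your recipe of summing the SD-EF inequalities of $i$ over $i'$ and of $i'$ over $i$ does not produce them. For example, if $x\succ_i y\succ_i z$ and $z\succ_k x\succ_k y$, then $e_{ii'}$ and $e_{ki'}$ have opposite slopes. Their compatibility $e_{ii'}+e_{ki'}=p_{kx}+p_{ky}+p_{i'x}\le 1$ is equivalent to $p_{ky}\le p_{ix}$, which needs $k$'s SD-EF ($p_{ky}\le 1/3$) together with $i$'s ($p_{ix}\ge 1/3$). Similarly, consider the endpoint check of $e_{ii'}$ against the assignment giving $i,i',k$ the objects $x,z,y$. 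It reads $1-p_{ix}-p_{i'z}\le 1/2$, which is true but requires combining $p_{iz}\le p_{i'z}$ and $p_{ix}\ge p_{i'x}$ with the column sums. So your $n=3$ case is a plan whose decisive step is missing: a profile-dependent case analysis over many such inequalities.

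The idea that makes the paper's proof manageable is to skip the existence argument and fix one \emph{preference-independent} point on the segment. Let $p^*$ be the minimum entry of $P$ and $J$ the all-ones matrix. Put weight $p^*/2$ on each of the six permutations, and decompose $P-p^*J$, which has a zero entry, in its unique way. Equivalently, the two permutations through a minimum entry receive equal weight. Because this decomposition does not depend on anyone's preferences, each agent's envy can be bounded separately. This uses only her own SD-EF inequalities, $p_{11}\ge\max\{p_{21},p_{31},1/3\}$ and $p_{11}+p_{12}\ge\max\{p_{21}+p_{22},p_{31}+p_{32},2/3\}$, together with bistochasticity and the minimality of $p^*$. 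The casework is then only over the position of the minimum entry, with no interaction between different agents' preferences. If you want to keep your Helly framework instead, you must enumerate the slope patterns over all preference profiles and prove each resulting inequality.
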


\begin{proof}
The case $n = 2$ will be covered by \Cref{thm:two-types}, so we assume that $n = 3$.
Consider any instance with three agents, and let $P$ be an SD-EF assignment matrix.
We will show that $P$ can be decomposed into a random assignment that satisfies Dec-EF.

Let $N=\{1,2,3\}$, $M=\{o_1,o_2,o_3\}$, and let $p_{ij}$ denote the probability that agent $i$ receives object $o_j$, i.e.,
\begin{align} 
P &= 
\begin{bmatrix}
p_{11} & p_{12} & p_{13} \\
p_{21} & p_{22} & p_{23} \\
p_{31} & p_{32} & p_{33}
\end{bmatrix}.
\end{align}
Let $p^*=\min_{i\in N}\min_{o_j\in M}p_{ij}$.
We decompose this matrix by first subtracting~$p^*$ from every entry of the matrix using the uniform decomposition where each of the six deterministic assignments gets probability $p^*/2$, and then decomposing the remaining matrix (which must contain at least one $0$) in the unique way.\footnote{To see the uniqueness, assume for example that $p_{12} = 0$. Then, $p_{21}$ can be used in only the deterministic assignment involving $p_{13}$, $p_{21}$, $p_{32}$. Similarly, each of $p_{31}$, $p_{23}$, and $p_{33}$ can be used in only one deterministic assignment. When $p_{21},p_{31},p_{23},p_{33}$ are all $0$, there cannot be three nonzero values in different rows and columns, so the entire matrix must be~$0$.}
We show that this decomposition satisfies Dec-EF.

Without loss of generality, we may assume that agent $1$'s preference is $o_1\succ_1 o_2\succ_1 o_3$.
Since $P$ is an SD-EF assignment matrix, we have
$p_{11}\ge \max\{p_{21},\,p_{31},\,1/3\}$ and $p_{11}+p_{12}\ge \max\{p_{21}+p_{22},\,p_{31}+p_{32},\,2/3\}$.

It suffices to show that, under the above decomposition, the probability that agent $1$ envies each of the other agents is at most~$1/2$.
We perform a case analysis based on the position of the minimum entry.
For $a,b,c$ such that $\{a,b,c\} = \{1,2,3\}$, we denote by $A^{abc}$ the permutation matrix corresponding to the deterministic assignment where agents $1,2,3$ get $o_a, o_b, o_c$, respectively.

\begin{itemize}
    \item If $p_{11}=p^*$, then since $P$ is SD-EF, we have $1=p_{11}+p_{21}+p_{31}\le 3p_{11}=3p^*$.
    As $p^*\ge 1/3$ is the smallest value of the entries and $P$ is bistochastic, every entry of~$P$ must be $1/3$.
    Hence, the decomposition is the uniform decomposition over all six deterministic assignments. This implies that agent $1$ envies each of the other agents with probability at most $1/2$.

    \item If $p_{12}=p^*$, then the decomposition becomes
    \begin{align}
    (p_{33}-\tfrac{p^*}{2})A^{123}+
    (p_{23}-\tfrac{p^*}{2})A^{132}+
    \tfrac{p^*}{2}A^{213}+
    \tfrac{p^*}{2}A^{231}+
    (p_{21}-\tfrac{p^*}{2})A^{312}+
    (p_{31}-\tfrac{p^*}{2})A^{321}.
    \end{align}
    We can confirm that this is indeed $P$ by a simple calculation: for example, the $(1,1)$ entry is $(p_{33}+p_{23})-p^* = (p_{11}+p_{12})-p^* = p_{11}$ because $p_{11}+p_{12} = 1 - p_{13} = p_{23}+p_{33}$.
    In this decomposition, agent $1$ envies agent $2$ when one of $A^{213}, A^{312}, A^{321}$ is chosen. Thus, agent $1$ envies agent $2$ with probability
    \begin{align}
        \tfrac{p^*}{2}+\left(p_{21}-\tfrac{p^*}{2}\right)+\left(p_{31}-\tfrac{p^*}{2}\right)
        &=p_{21}+p_{31}-\tfrac{p^*}{2} \\
        &=(1-p_{11})-\tfrac{p_{12}}{2}\\
        &= 1-\tfrac{p_{11}}{2}-\tfrac{p_{11}+p_{12}}{2}
        \le 1-\tfrac{1/3}{2}-\tfrac{2/3}{2}=\tfrac{1}{2}.
    \end{align}
    The probability that agent $1$ envies agent $3$ is also at most $1/2$ by the same argument.

    \item If $p_{13}=p^*$, then the decomposition becomes
    \begin{align}
    (p_{22}-\tfrac{p^*}{2})A^{123}+
    (p_{32}-\tfrac{p^*}{2})A^{132}+
    (p_{21}-\tfrac{p^*}{2})A^{213}+
    (p_{31}-\tfrac{p^*}{2})A^{231}+
    \tfrac{p^*}{2}A^{312}+
    \tfrac{p^*}{2}A^{321}.
    \end{align}   
    In this decomposition, agent $1$ envies agent $2$ with probability
    \begin{align}
        \left(p_{21}-\tfrac{p^*}{2}\right)+\tfrac{p^*}{2}+\tfrac{p^*}{2}
        &=p_{21}+\tfrac{p^*}{2}
        \le \tfrac{p_{21}+p_{11}}{2}+\tfrac{p^*}{2}
        =\tfrac{p_{32}+p_{33}}{2}+\tfrac{p^*}{2}
        \le \tfrac{p_{31}+p_{32}+p_{33}}{2}=\tfrac{1}{2},
    \end{align}
    where the second equality holds by $p_{21}+p_{11} = 1 - p_{31} = p_{32}+p_{33}$.
    The probability that agent~$1$ envies agent $3$ is also at most $1/2$ by the same argument.

    \item If $p_{21}=p^*$, then the decomposition becomes
    \begin{align}
    (p_{33}-\tfrac{p^*}{2})A^{123}+
    (p_{32}-\tfrac{p^*}{2})A^{132}+
    \tfrac{p^*}{2}A^{213}+
    (p_{12}-\tfrac{p^*}{2})A^{231}+
    \tfrac{p^*}{2}A^{312}+
    (p_{13}-\tfrac{p^*}{2})A^{321}.
    \end{align}
    In this decomposition, agent $1$ envies agent $2$ with probability
    \begin{align}
        \tfrac{p^*}{2}+\tfrac{p^*}{2}+\left(p_{13}-\tfrac{p^*}{2}\right)
        =p_{13}+\tfrac{p^*}{2}
        \le \tfrac{3p_{13}}{2}
        = \tfrac{3(1-p_{11}-p_{12})}{2}
        \le \tfrac{1}{2}.
    \end{align}
    Additionally, agent $1$ envies agent $3$ with probability
    \begin{align}
    \textstyle
        \left(p_{12}-\frac{p^*}{2}\right)+\frac{p^*}{2}+\left(p_{13}-\frac{p^*}{2}\right)
        =p_{12}+p_{13}-\frac{p_{21}}{2}
        = \frac{p_{12}+p_{13}+p_{31}}{2}
        \le \frac{p_{12}+p_{13}+p_{11}}{2}
        = \frac{1}{2},
    \end{align}
    where the second equality holds by $p_{12}+p_{13} = 1 - p_{11} = p_{21}+p_{31}$.

    \item If $p_{22}=p^*$, then the decomposition becomes
    \begin{align}
    \tfrac{p^*}{2}A^{123}+
    (p_{11}-\tfrac{p^*}{2})A^{132}+
    (p_{33}-\tfrac{p^*}{2})A^{213}+
    (p_{31}-\tfrac{p^*}{2})A^{231}+
    (p_{13}-\tfrac{p^*}{2})A^{312}+
    \tfrac{p^*}{2}A^{321}.
    \end{align}
    In this decomposition, agent $1$ envies agent $2$ with probability
    \begin{align}
    \textstyle
        \left(p_{33}-\frac{p^*}{2}\right)+\left(p_{13}-\frac{p^*}{2}\right)+\frac{p^*}{2}
        =p_{21}+\frac{p_{22}}{2}
        \le \frac{p_{11}+p_{21}+p_{22}}{2}
        \le \frac{p_{11}+p_{21}+p_{31}}{2}
        = \frac{1}{2},
    \end{align}
    where the second inequality holds by $p_{22}=p^*\le p_{31}$.
    Additionally, agent $1$ envies agent $3$ with probability
    \begin{align}
    \textstyle
        \left(p_{31}-\frac{p^*}{2}\right)+\left(p_{13}-\frac{p^*}{2}\right)+\frac{p^*}{2}
        =p_{31}+p_{13}-\frac{p_{22}}{2}
        \le \frac{p_{31}+p_{11}+p_{13}+p_{33}}{2}-\frac{p_{22}}{2}
        = \frac{1}{2},
    \end{align}
    where the inequality holds by $p_{31}\le p_{11}$ and $p_{13}=1-(p_{11}+p_{12})\le 1-(p_{31}+p_{32})=p_{33}$, and the last equality holds by $p_{31}+p_{11} = 1-p_{21}$ and $p_{13}+p_{33} = 1 - p_{23}$.
    \item If $p_{23}=p^*$, we have $p_{13}=1-(p_{11}+p_{12})\le 1-(p_{21}+p_{22})=p_{23}=p^*$.
    Since $p^*$ is the minimum value, we obtain $p_{13} = p^*$.  
    Therefore, by the analysis of the case $p_{13} = p^*$, the probability that agent $1$ envies each of the other agents is at most $1/2$.
\end{itemize}
The remaining cases where $p_{31}=p^*$, $p_{32}=p^*$, or $p_{33}=p^*$ can be handled in the same way as the cases 
$p_{21}=p^*$, $p_{22}=p^*$, or $p_{23}=p^*$, respectively.
Therefore, the decomposition satisfies Dec-EF.
\end{proof}
\begin{comment}
We demonstrate that the following is a Dec-EF decomposition:
\begin{equation}
\begin{aligned}
P &= 
\frac{p_{11}}{2}\begin{bmatrix}1&0&0\\0&1&0\\0&0&1\end{bmatrix}+
\frac{p_{11}}{2}\begin{bmatrix}1&0&0\\0&0&1\\0&1&0\end{bmatrix}+
\left(p_{33}-\frac{p_{11}}{2}\right)\begin{bmatrix}0&1&0\\1&0&0\\0&0&1\end{bmatrix}\\
&\qquad+\left(p_{23}-\frac{p_{11}}{2}\right)\begin{bmatrix}0&1&0\\0&0&1\\1&0&0\end{bmatrix}+
\left(p_{32}-\frac{p_{11}}{2}\right)\begin{bmatrix}0&0&1\\1&0&0\\0&1&0\end{bmatrix}+
\left(p_{22}-\frac{p_{11}}{2}\right)\begin{bmatrix}0&0&1\\0&1&0\\1&0&0\end{bmatrix}.
\end{aligned}\label{eq:three-decomposition}
\end{equation}
Note that, by the assumption that $p_{11}$ is the smallest entry, each coefficient is nonnegative.
Moreover, decomposition~\eqref{eq:three-decomposition} is indeed a decomposition of $P$, since
\[
\begin{bmatrix}
p_{11} & p_{12} & p_{13} \\
p_{21} & p_{22} & p_{23} \\
p_{31} & p_{32} & p_{33}
\end{bmatrix}=
\begin{bmatrix}
p_{11} & p_{33}+p_{23}-p_{11} & p_{32}+p_{22}-p_{11} \\
p_{33}+p_{32}-p_{11} & p_{22} & p_{23} \\
p_{23}+p_{22}-p_{11} & p_{32} & p_{33}
\end{bmatrix},
\]
which follows from the fact that the entries in each row and each column of $P$ sum to~$1$.
\end{comment}

Next, we address the case where the agents have at most two distinct preferences.

\begin{theorem}
\label{thm:two-types}
For every instance such that the agents have at most two distinct preferences, every SD-EF assignment matrix is EF-decomposable.
\end{theorem}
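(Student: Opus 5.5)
The plan is to exploit the fact that SD-EF forces agents with identical preferences to have identical rows. Then I symmetrize an arbitrary decomposition within each preference type. Finally, I use a counting identity showing that the cross-type envy probability depends only on the assignment matrix, not on the decomposition.

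\textbf{Step 1 (equal rows within a type).} If agents $i,i'$ have the same preference $\succ$, SD-EF applied in both directions gives equality of all cumulative sums $\sum_{o'\succeq o}P_{io'}=\sum_{o'\succeq o}P_{i'o'}$. Hence $P_i=P_{i'}$. So let $N_A$ (of size $k$) be the agents with preference $\succ_A$, all with row $x$. Let $N_B$ (of size $n-k$) be those with $\succ_B$, all with row $y$. Bistochasticity gives $kx+(n-k)y=\mathbf{1}$. This step also covers $n=2$, as used in the proof of \Cref{thm:three-agents}.

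\textbf{Step 2 (symmetrized decomposition).} Take any Birkhoff--von Neumann decomposition of $P$. Compose each deterministic assignment with an independent uniformly random permutation of the agents within $N_A$ and a uniformly random permutation within $N_B$. Because rows within a type are equal, the resulting random assignment is still a decomposition of $P$. Equivalently, we draw a random $k$-set $S$ of objects, given to $N_A$ with $\Pr[o\in S]=kx_o$. We then hand $S$ to $N_A$ by a uniform bijection and $T=M\setminus S$ to $N_B$ by a uniform bijection.

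\textbf{Within-type envy.} For $i,i'$ in the same type, conditioned on $S$ the pair of their objects is exchangeable. So $i$ envies $i'$ with probability exactly $1/2$ (strict preferences).

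\textbf{Step 3 (cross-type envy, the main point).} Take $i\in N_A$ and $i'\in N_B$. Given $S$, their objects are independent and uniform on $S$ and on $T$, respectively. Therefore
\[
\Pr[i \text{ envies } i'] = \frac{1}{k(n-k)}\,\mathbb{E}\bigl[\#\{(a,b)\in S\times T : b\succ_A a\}\bigr].
\]
Write $I$ for this expected count of ``inverted'' pairs and $J$ for the expected count of pairs with $a\succ_A b$. Then $I+J=k(n-k)$. By linearity, $I-J=\sum_{b\succ_A a}\bigl(\Pr[a\in S,b\notin S]-\Pr[b\in S,a\notin S]\bigr)$. Each summand equals $\Pr[a\in S]-\Pr[b\in S]=k(x_a-x_b)$, because the $\Pr[a,b\in S]$ terms cancel. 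So the quantity depends only on the marginals, and it suffices to show $\sum_{b\succ_A a}(x_a-x_b)\le 0$.

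Label objects by their rank $r=1,\dots,n$ under $\succ_A$. The condition becomes $\sum_r x_r(2r-n-1)\le 0$, i.e., the expected $\succ_A$-rank under $x$ is at most $(n+1)/2$.

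By SD-EF, $x$ stochastically dominates $y$ with respect to $\succ_A$. Since the uniform vector $\mathbf{1}/n=\frac{k}{n}x+\frac{n-k}{n}y$ is a convex combination of $x$ and $y$, $x$ also dominates the uniform distribution. Stochastic dominance implies a smaller (weakly better) expected rank, which gives the inequality. The symmetric argument with $\succ_B$ and $y$ handles envy from $N_B$ to $N_A$, so all pairs envy with probability at most $1/2$.

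I expect Step 3 to be the only real obstacle. A priori the cross-type envy probability seems to depend on correlations in the choice of $S$, and it is the cancellation identity that removes this dependence.
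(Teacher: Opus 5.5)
Your proof is correct. Its architecture matches the paper's. Rows are equal within a type. The decomposition is symmetric within each type, so that, conditional on the set $S$ handed to the first type, a first-type and a second-type agent get independent uniform objects from $S$ and $M\setminus S$. The cross-type bound then depends only on marginals. Both main ingredients, however, are realized differently.

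The paper builds its decomposition iteratively. It chooses a permutation matrix in the support of the current residual and forms the $2r\cdot 2s$ matrices obtained by cyclic shifts and reflections of rows within each type. It then subtracts the largest feasible multiple of their sum and repeats at most $n+1$ times. You instead average an arbitrary Birkhoff--von Neumann decomposition over all within-type permutations. This makes both the preservation of $P$ and within-type exchangeability immediate. The cost is that the support can grow by a factor $k!(n-k)!$, whereas the paper's decomposition uses at most $4rs(n+1)$ deterministic assignments. Averaging each of your components only over within-type cyclic shifts and reflections would already give the two properties you use, with polynomial support.

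For the inequality, the paper expresses the envy probability through $\mathrm{pos}(S,j)$. It establishes $\tfrac12-E(i_1,i_2)\ge 0$ via Claim~1, which adds the SD-EF prefix inequalities for $\tau=t$ and $\tau=n-t$, and the rearrangement identity of Claim~2. Your $I\pm J$ cancellation exposes the dependence on marginals directly. Your observation that $x$ dominates $\mathbf{1}/n=\frac{k}{n}x+\frac{n-k}{n}y$ is precisely the content of the paper's inequalities $\sum_{j\le\tau}a_j\ge\frac{r}{r+s}\tau$. After that, ``dominance lowers expected rank'' finishes in one line.

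Your route is shorter and more conceptual; the paper's buys an explicit small-support decomposition.
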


\newcommand{\pos}{\mathrm{pos}}
\newcommand{\Sr}{\mathcal{S}_r}
\begin{proof}
Let us say that agents with the same preference are of the same type.
We will prove the theorem when there are exactly two agent types---if there is only one agent type, we can arbitrarily designate one agent to be of the second type and apply the same argument.

Consider any instance with two agent types, and assume that the first type and second type contain $r$ and $s$ agents, respectively, where $r,s \ge 1$.
Let $P$ be an SD-EF assignment matrix, where $N=\{1,2,\dots, n\}$ and $M=\{o_1, o_2, \dots, o_n\}$, and let $p_{ij}$ denote the probability that agent~$i$ receives object~$o_j$ according to~$P$.

Observe that $P_i=P_{i'}$ holds for any agents $i$ and $i'$ of the same type, because otherwise SD-EF would be violated. 
For each object $o_j$, let $a_j$ and $b_j$ denote the probability that $o_j$ is allocated to an agent of the first and second type, respectively. Then, for any $o_j$, we have $a_j+b_j=1$, and $p_{ij}=a_j/r$ for each first-type agent $i$ and $p_{i'j}=b_j/s$ for each second-type agent $i'$. 

We now construct a decomposition $\pi$ of $P$. Without loss of generality, we assume that the first $r$ rows correspond to the first-type agents.
Then, $P$ is a bistochastic matrix whose first $r$ rows are identical and whose remaining $s$ rows are identical.

\begin{enumerate}
\item 
Let $Q$ be any permutation matrix (i.e., a deterministic assignment) such that $q_{ij}=1$ only if $p_{ij}>0$ for any $i,j$. The existence of such $Q$ is guaranteed by the Birkhoff--von Neumann theorem. 
Divide $Q$ into two parts; $Q_1$ consists of the first $r$ rows and $Q_2$ consists of the remaining $s$ rows. 

\item Consider the set of matrices of size $r\times n$ obtained by cyclically shifting the rows in $Q_1$. There are $r$ such matrices, including $Q_1$. Also, let $Q_1^*$ be the matrix obtained from $Q_1$ by vertical reflection (i.e., reversing the order of rows), and consider the set of $r$ matrices obtained by cyclically shifting the rows in $Q_1^*$. Thus, we have $2r$ matrices of size $r \times n$ in total, each assigning the same set of objects to the first-type agents. Similarly, from $Q_2$, by vertical reflection and cyclic shifting, we define $2s$ matrices of size $s \times n$, each of which assigns the same set of objects to the second-type agents. 

\item By considering all combinations of these upper and lower matrices, we obtain a set of $2r \cdot 2s$ matrices of size $n \times n$, which we denote by $\mathcal{Q}$. Since these matrices are obtained from $Q$ by shifting and reflection within each type, and the first $r$ rows of~$P$ are identical and the remaining $s$ rows are also identical, any $Q' \in \mathcal{Q}$ satisfies $q'_{ij}=1$ only if $p_{ij}>0$.

\item As a result of shifting, the matrix $\sum_{Q' \in \mathcal{Q}} Q'$ has identical first $r$ rows and identical last $s$ rows. Moreover, as a result of vertical reflection (combined with shifting), for any agents $i,i'$ of the same type and any objects $o_j,o_{j'}$, 
the number of matrices in $\mathcal{Q}$ assigning $o_j$ to $i$ and $o_{j'}$ to $i'$ 
equals the number of those assigning $o_j$ to $i'$ and $o_{j'}$ to $i$ (i.e., either both are one or both are zero). This symmetry will ensure Dec-EF among agents of the same type in the resulting decomposition.

\item Let $\alpha$ be the maximum value such that $\alpha \sum_{Q' \in \mathcal{Q}} Q'$ does not exceed $P$ at any entry. Then, the matrix $P' \coloneqq P - \alpha \sum_{Q' \in \mathcal{Q}} Q'$ has more zero entries than $P$. More precisely, at least one type--object pair is eliminated. Since both $P$ and $\sum_{Q' \in \mathcal{Q}} Q'$ have the property that the first $r$ rows are identical and the remaining $s$ rows are identical, we see that there is an object $o_j$ such that $p_{ij} > p'_{ij} = 0$ for either every first-type agent $i$ or every second-type agent $i$.
We decompose $P$ as $\alpha \sum_{Q' \in \mathcal{Q}} Q'+P' = \sum_{Q' \in \mathcal{Q}}(\alpha \cdot Q') + P'$, where each $Q'$ is a permutation matrix.

\item We stop if the resultant matrix $P'$ is a zero matrix. If not, then with an appropriate scaling, $P'$ is a bistochastic matrix with the property that the first $r$ rows are identical and the remaining $s$ rows are identical. Therefore, we can apply the above operation again to decompose $P'$. Since at each step some type--object pair is eliminated, and the last step eliminates $n$ type--object pairs simultaneously, the number of repetitions is at most $n+1$.  
\end{enumerate}

This completes the construction\footnote{See \Cref{ex:two-types-decomposition} after this proof for a concrete example.} of the decomposition~$\pi$ of $P$.
From the construction, we see that the number of matrices with positive probabilities in $\pi$ is at most $4rs(n+1)$. 
We show that $\pi$ satisfies Dec-EF. 
The fact that no agent envies an agent of the same type with probability more than $1/2$ follows from the construction as explained above (in the fourth item).
Therefore, it remains to show that no agent envies an agent of a different type with probability more than $1/2$. 
By symmetry, it suffices to show that any first-type agent does not envy any second-type agent with probability more than $1/2$. Without loss of generality, we assume that the preferences of the first-type agents $i$ are $o_1 \succ_i o_2 \succ_i \cdots \succ_i o_n$.

Let $\Sr$ denote the family of subsets of $M$ of size $r$. For each $S\in \Sr$, let $a_S$ denote the probability that the set of objects allocated to the first-type agents coincides with $S$ under $\pi$. 
By definition, for each $o_j\in M$, we have $\sum_{S\in \Sr:\, o_j \in S} a_S = a_j$.
Take any first-type agent $i_1$ and any second-type agent $i_2$.
Note that when the first-type agents receive the set $S\in \Sr$, for each pair of objects $(o_j, o_{j'}) \in S \times (M \setminus S)$, the conditional probability that $i_1$ receives $o_j$ and $i_2$ receives $o_{j'}$ is $\tfrac{1}{rs}$ by the symmetry of $\pi$.
Moreover, $i_1$ envies $i_2$ if and only if $o_{j'}$ is one of the objects in $M \setminus S$ preceding $o_j$ according to the preference $o_1\succ_{i_1} o_2\succ_{i_1}\dots\succ_{i_1} o_n$. 
For each $o_j\in S$, there are $j-\pos(S,j)$ objects in $M \setminus S$ preceding $o_j$, where $\pos(S,j)$ is defined so that $o_j$ is the $\pos(S,j)$-th best object in $S$ for $i_1$.
Using these, the probability that $i_1$ envies $i_2$ can be written as
\[E(i_1, i_2) \coloneqq \sum_{S\in \Sr}a_S \left(\frac{1}{rs}\sum_{o_j\in S}(j-\pos(S,j))\right).\]
Our goal is to show that the value $E(i_1, i_2)$ is at most $1/2$, i.e., it holds that $\frac{1}{2}-E(i_1, i_2)\geq 0$.
Observe that $\sum_{S\in\Sr}a_S=1$ and, for each~$S$, we have $\sum_{o_j\in S}\pos(S,j)=(r+1)r/2$. Therefore, 
\begin{align}
\frac{1}{2}-E(i_1, i_2)&=\sum_{S\in \Sr}a_S \left(\frac{1}{2}-\frac{1}{rs}\sum_{o_j\in S}(j-\pos(S,j))\right) \\
&=\frac{1}{2rs}\sum_{S\in \Sr}a_S \left(r(s+r+1)-2\sum_{o_j\in S}j\right). \label{eq:2types-envy}
\end{align}
To show the nonnegativity of this expression, we use the following two claims.
\begin{claim}\label{claim:two-type-1}
For any $t\in \{1,2,\dots,n\}$, we have $\sum_{j=1}^t (a_j-a_{n-j+1})\geq 0$.
\end{claim}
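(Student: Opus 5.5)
The plan is to convert the SD-EF condition between a first-type and a second-type agent into a lower bound on the prefix sums of the $a_j$. Once that bound is in hand, the claim follows by comparing a prefix with the complementary suffix.

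Write $A_k \coloneqq \sum_{j=1}^k a_j$ for $k\in\{0,1,\dots,n\}$, with $A_0=0$. Since the $r$ first-type rows are identical and each sums to $1$, we have $A_n=r$.

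\textbf{Prefix bound.} Take a first-type agent $i_1$ and a second-type agent $i_2$, and recall that $i_1$ has preference $o_1\succ_{i_1}\cdots\succ_{i_1} o_n$. SD-EF says that $P_{i_1}$ stochastically dominates $P_{i_2}$ with respect to $\succ_{i_1}$. For each $k$ this gives
\[
\sum_{j=1}^k \frac{a_j}{r}\ \ge\ \sum_{j=1}^k \frac{b_j}{s}=\frac{k-A_k}{s}.
\]
Multiplying by $rs$ gives $sA_k\ge r(k-A_k)$, that is, $(r+s)A_k\ge rk$. Since $r+s=n$, this means
\[
A_k\ \ge\ \frac{rk}{n}\qquad\text{for all } k\in\{0,\dots,n\}.
\]
The case $k=0$ holds trivially.

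\textbf{Suffix bound.} The suffix sum is $\sum_{j=1}^t a_{n-j+1}=\sum_{j=n-t+1}^{n} a_j = A_n - A_{n-t} = r - A_{n-t}$. Applying the prefix bound with $k=n-t$ gives
\[
\sum_{j=1}^t a_{n-j+1}\ \le\ r-\frac{r(n-t)}{n}\ =\ \frac{rt}{n}.
\]

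\textbf{Conclusion.} Applying the prefix bound with $k=t$ gives $A_t\ge rt/n$. Combining the two bounds,
\[
\sum_{j=1}^t\bigl(a_j-a_{n-j+1}\bigr)=A_t-\bigl(r-A_{n-t}\bigr)\ \ge\ \frac{rt}{n}-\frac{rt}{n}=0,
\]
which is the claim.

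The only step requiring care is the prefix bound. It uses just two facts from earlier in the proof: $p_{ij}=a_j/r$ for first-type agents and $p_{i'j}=b_j/s=(1-a_j)/s$ for second-type agents. It also uses that $r+s=n$ makes the coefficients combine into exactly $rk/n$. Beyond that step there is no real obstacle; the rest is bookkeeping with complementary index ranges.
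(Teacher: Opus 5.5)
Your proposal is correct and follows essentially the same route as the paper: both derive the prefix bound $\sum_{j\le k} a_j \ge rk/n$ from SD-EF between a first-type and a second-type agent, then combine this bound at $k=t$ and $k=n-t$ with the total $\sum_{j=1}^n a_j = r$. By allowing $k\in\{0,\dots,n\}$ explicitly, you also cover the edge case $t=n$, which the paper's stated range $\tau\in\{1,\dots,n-1\}$ passes over.
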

\begin{claimproof}
Because $P$ is SD-EF, the random allocation of a first-type agent stochastically dominates that of a second-type agent with respect to the former agent's preference. 
Hence, for each $\tau\in \{1,2,\dots,n\}$, it holds that $\sum_{j=1}^\tau a_j/r\geq \sum_{j=1}^\tau b_j/s$.
Moreover, we have $\sum_{j=1}^n a_j/r=1$. By substituting $b_j=1-a_j$, these conditions can be rewritten as 
\[\sum_{j=1}^\tau a_j\geq \frac{r}{r+s}\cdot \tau\quad (\tau=1,2,\dots, n-1),\qquad
\sum_{j=1}^n a_j= \frac{r}{r+s}\cdot n.\]
For each $t\in\{1,2,\dots,n\}$, adding the inequalities on the left with $\tau=t$ and $\tau=n-t$, and subtracting the equality on the right, we get
\[0=\frac{r}{r+s}\cdot t +\frac{r}{r+s}\cdot (n-t) -\frac{r}{r+s}\cdot n\leq \sum_{j=1}^t a_j+\sum_{j=1}^{n-t} a_j-\sum_{j=1}^n a_j
=\sum_{j=1}^t(a_j-a_{n-j+1}),\]
which completes the proof of the claim.
\end{claimproof}
\begin{claim}\label{claim:two-type-2}
$\sum_{t=1}^n \sum_{j=1}^t (a_j-a_{n-j+1})=\sum_{S\in \Sr}a_S \left(r(s+r+1)-2\sum_{o_j\in S}j\right)$.
\end{claim}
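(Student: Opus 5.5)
The identity is purely algebraic. I would prove it by expanding both sides as linear expressions in the $a_j$, rewriting the right-hand side through the relation $\sum_{S\in\Sr:\,o_j\in S} a_S = a_j$, and then matching coefficients.

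\textbf{Left-hand side.} Exchange the order of summation. For fixed $j$, the term $a_j - a_{n-j+1}$ appears once for each $t \in \{j,\dots,n\}$, that is, $n-j+1$ times. Hence
\[
\sum_{t=1}^n \sum_{j=1}^t (a_j-a_{n-j+1}) = \sum_{j=1}^n (n-j+1)a_j - \sum_{j=1}^n (n-j+1)a_{n-j+1}.
\]
In the second sum, substitute $k = n-j+1$, which gives $\sum_{k=1}^n k\,a_k$. The left-hand side therefore equals
\[
\sum_{j=1}^n (n+1-2j)\,a_j .
\]

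\textbf{Right-hand side.} There are two pieces.
\begin{itemize}
\item The constant piece: since $\sum_S a_S = 1$, it equals $r(r+s+1) = r(n+1)$.
\item The linear piece: swapping sums gives $\sum_S a_S \sum_{o_j\in S} j = \sum_j j \sum_{S\ni o_j} a_S = \sum_j j\,a_j$.
\end{itemize}
So the right-hand side equals
\[
r(n+1) - 2\sum_{j=1}^n j\,a_j .
\]

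\textbf{Matching.} It remains to show $\sum_j (n+1)\,a_j = r(n+1)$, that is, $\sum_j a_j = r$. This holds because the first-type agents receive exactly $r$ objects in total. Equivalently, summing the rows of $P$ for the $r$ first-type agents gives $\sum_j a_j = r$, since each row sums to $1$.

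The main point needing care is the consistency of $\sum_S a_S \mathbf{1}[o_j\in S] = a_j$ with $\sum_S a_S = 1$. It is guaranteed because every deterministic assignment in $\pi$ gives the first type exactly $r$ objects.
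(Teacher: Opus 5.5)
Your proof is correct and essentially the paper's argument: both exchange the order of summation to reduce the left-hand side to $\sum_j (n+1-2j)\,a_j$, and both use $\sum_{S\ni o_j} a_S = a_j$ to pass between the $a_j$ and the $a_S$. The only difference is where the constant term comes from. The paper substitutes into the left-hand side and gets $r(n+1)$ from $|S|=r$ for each $S$, so it never needs $\sum_S a_S = 1$. You instead get it from $\sum_S a_S = 1$ together with $\sum_j a_j = r$, and you justify both facts correctly.
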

\begin{claimproof}
Recall that we have $\sum_{S\in \Sr:\, o_j\in S}a_S=a_j$ for every $o_j\in M$. By substituting this, we obtain the following chain of equalities:
\begin{align*}
\sum_{t=1}^n \sum_{j=1}^t (a_j-a_{n-j+1})
&=\sum_{t=1}^n \sum_{j=1}^t \left(\sum_{S\in \Sr:\, o_j\in S}a_S - \sum_{S\in \Sr:\, o_{n-j+1}\in S}a_S\right)\\    
&=\sum_{S\in \Sr} \left[ a_S \sum_{t=1}^n \left(\sum_{j:\, o_j\in S,\, j\leq t} 1 - \sum_{j:\, o_{n-j+1}\in S,\, j\leq t} 1\right)\right]\\    
&=\sum_{S\in \Sr} \left[ a_S \sum_{o_j\in S} \left(\sum_{t=j}^n 1 - \sum_{t=n-j+1}^n 1\right)\right]\\
&=\sum_{S\in \Sr} \left[ a_S \sum_{o_j\in S} \left[(n-j+1) - (n-(n-j+1)+1)\right]\right]\\
&=\sum_{S\in \Sr} \left[ a_S \sum_{o_j\in S} (n+1-2j) \right] =\sum_{S\in \Sr}a_S \left(r(n+1)-2\sum_{o_j\in S}j\right).
\end{align*}
As $n=s+r$, we obtain the desired equality.
\end{claimproof}
We now complete the proof of Theorem~\ref{thm:two-types}. By Claim~\ref{claim:two-type-2} and Equation~\eqref{eq:2types-envy}, we have
\begin{align*}
\frac{1}{2}-E(i_1, i_2)=\frac{1}{2rs}\sum_{t=1}^n \sum_{j=1}^t (a_j-a_{n-j+1}),
\end{align*}
and Claim~\ref{claim:two-type-1} implies that $\sum_{j=1}^t (a_j-a_{n-j+1})\geq 0$ for each $t\in\{1,2,\dots, n\}$. Thus, we obtain the inequality $\frac{1}{2}-E(i_1, i_2)\geq 0$, as desired. 
\end{proof}

We provide an example illustrating the decomposition procedure from \Cref{thm:two-types}.

\begin{example}
\label{ex:two-types-decomposition}
Consider an instance with $n = 4$ agents such that the first three agents have the preference $o_1\succ_i o_2\succ_i o_3\succ_i o_4$ for $i\in\{1,2,3\}$, while the fourth agent has the preference $o_2\succ_4 o_3\succ_4 o_4\succ_4 o_1$.
The assignment matrix
\[  
P = \begin{bmatrix}
1/3 & 1/6 & 1/4 & 1/4 \\
1/3 & 1/6 & 1/4 & 1/4 \\
1/3 & 1/6 & 1/4 & 1/4 \\
0 & 1/2 & 1/4 & 1/4 \\
\end{bmatrix}
\]
satisfies SD-EF (in fact, it is the output of PS), where the rows correspond to agents $1,2,3,4$ and the columns correspond to objects $o_1,o_2,o_3,o_4$, respectively. 

To apply the decomposition procedure from \Cref{thm:two-types} on~$P$, we first identify an appropriate permutation matrix~$Q$.
One such permutation matrix is
\[
Q = \begin{bmatrix}
1 & 0 & 0 & 0 \\
0 & 0 & 1 & 0 \\
0 & 0 & 0 & 1 \\
0 & 1 & 0 & 0 \\
\end{bmatrix}.
\]
With this~$Q$, the set~$\mathcal{Q}$ consists of $(2\cdot 3)\cdot (2\cdot 1) = 12$ permutation matrices obtained from $Q$ by shifting and reflection within each type of agents, and
\[
\sum_{Q'\in\mathcal{Q}} Q' = \begin{bmatrix}
4 & 0 & 4 & 4 \\
4 & 0 & 4 & 4 \\
4 & 0 & 4 & 4 \\
0 & 12 & 0 & 0 \\
\end{bmatrix}.
\]
The largest $\alpha$ such that $\alpha \sum_{Q' \in \mathcal{Q}} Q'$ does not exceed $P$ at any entry is $\alpha = 1/24$, and the resultant matrix $P' \coloneqq P - \alpha \sum_{Q' \in \mathcal{Q}} Q'$ is
\[
P' = \begin{bmatrix}
1/6 & 1/6 & 1/12 & 1/12 \\
1/6 & 1/6 & 1/12 & 1/12 \\
1/6 & 1/6 & 1/12 & 1/12 \\
0 & 0 & 1/4 & 1/4 \\
\end{bmatrix}.
\]

As $P'$ is not a zero matrix, we can continue by taking\footnote{The procedure from \Cref{thm:two-types} rescales $P'$ to be a bistochastic matrix.
However, for the sake of convenience, we will not do that here.}
\[
Q = \begin{bmatrix}
1 & 0 & 0 & 0 \\
0 & 1 & 0 & 0 \\
0 & 0 & 1 & 0 \\
0 & 0 & 0 & 1 \\
\end{bmatrix};
\qquad
\sum_{Q'\in\mathcal{Q}} Q' = \begin{bmatrix}
4 & 4 & 4 & 0 \\
4 & 4 & 4 & 0 \\
4 & 4 & 4 & 0 \\
0 & 0 & 0 & 12 \\
\end{bmatrix}.
\]
The corresponding $\alpha$ is $\alpha = 1/48$, which results in
\[
P' = \begin{bmatrix}
1/12 & 1/12 & 0 & 1/12 \\
1/12 & 1/12 & 0 & 1/12 \\
1/12 & 1/12 & 0 & 1/12 \\
0 & 0 & 1/4 & 0 \\
\end{bmatrix}.
\]

Lastly, we take
\[
Q = \begin{bmatrix}
1 & 0 & 0 & 0 \\
0 & 1 & 0 & 0 \\
0 & 0 & 0 & 1 \\
0 & 0 & 1 & 0 \\
\end{bmatrix};
\qquad
\sum_{Q'\in\mathcal{Q}} Q' = \begin{bmatrix}
4 & 4 & 0 & 4 \\
4 & 4 & 0 & 4 \\
4 & 4 & 0 & 4 \\
0 & 0 & 12 & 0 \\
\end{bmatrix}.
\]
The corresponding $\alpha$ is again $\alpha = 1/48$, which results in
\[
P' = \begin{bmatrix}
0 & 0 & 0 & 0 \\
0 & 0 & 0 & 0 \\
0 & 0 & 0 & 0 \\
0 & 0 & 0 & 0 \\
\end{bmatrix},
\]
and the procedure terminates.
In the final decomposition:
\begin{itemize}
\item each of the six deterministic assignments where agent~$4$ receives $o_2$ has a probability of $1/12$;
\item each of the six deterministic assignments where agent~$4$ receives $o_4$ has a probability of $1/24$;
\item each of the six deterministic assignments where agent~$4$ receives $o_3$ has a probability of $1/24$.
\end{itemize}
Under this decomposition, the probability that agent~$1$ envies agent~$4$ is $5/12 < 1/2$, and the probability that agent~$4$ envies agent~$1$ is $1/4 < 1/2$.
The same holds when agent~$1$ is replaced by agent~$2$ or $3$.
\end{example}

Since the assignment matrix output by PS satisfies SD-EF \citep{BogomolnaiaMo01}, \Cref{thm:three-agents,thm:two-types} imply the following corollary.
\begin{corollary}
For every instance with at most three agents, the assignment matrix output by PS is EF-decomposable.
The same holds for every instance where agents have at most two distinct preferences.
\end{corollary}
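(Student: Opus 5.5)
This corollary follows by composing two facts, so the plan is short. Fix an instance that has at most three agents or in which the agents have at most two distinct preferences, and let $P$ be the assignment matrix returned by PS on it. The argument has three steps.

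\begin{enumerate}
\item Invoke the result of \citet{BogomolnaiaMo01}, recalled in \Cref{sec:intro} and in the preliminaries, that the output of PS always satisfies SD-EF. This is a property of the matrix alone, so nothing about decompositions is needed here.
\item If $n\le 3$, apply \Cref{thm:three-agents} to $P$ to conclude that $P$ is EF-decomposable.
\item If the agents have at most two distinct preferences, apply \Cref{thm:two-types} to $P$ to reach the same conclusion.
\end{enumerate}

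The only point to check is that the hypotheses line up. Both theorems are stated for every SD-EF assignment matrix of every instance in their class, with no extra assumption on the matrix. PS is defined for every instance in the paper's model: $n$ agents, $n$ objects, strict preferences, and $n\ge 2$ by the standing assumption. So no edge case needs separate treatment.

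I do not expect any real obstacle. At most I would remark that the Dec-EF decomposition is not produced by PS itself. It comes from the constructions inside the proofs of \Cref{thm:three-agents,thm:two-types}: the uniform-then-unique decomposition for three agents, and the shift-and-reflect decomposition for two types. That remark also explains why the corollary has content, namely that an arbitrary decomposition of the PS matrix can fail Dec-EF, as noted for \Cref{ex:cyclic}.
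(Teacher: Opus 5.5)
Your proposal is correct and takes the same approach as the paper: combine the fact that the PS matrix is SD-EF \citep{BogomolnaiaMo01} with \Cref{thm:three-agents,thm:two-types}. Your check that the hypotheses line up is sound, and your remark that the Dec-EF decomposition comes from the constructions in those proofs, not from PS itself, is accurate.
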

It remains an intriguing question whether every SD-EF assignment matrix (or every matrix output by PS) is EF-decomposable in general.
This is true for all instances that we have checked using a computer search; in particular, we verified that for every instance with four agents, the matrix output by PS is EF-decomposable.
Unfortunately, the techniques that we developed for proving \Cref{thm:three-agents,thm:two-types} are not sufficient to answer the question.
Nevertheless, we shall derive upper bounds on the possible envy between agents in the next two propositions.

\begin{proposition}
\label{prop:SD-EF-upper}
Let $n\ge 2$ be a positive integer.
Consider any instance with $n$~agents, and let $i,i'\in N$ be any pair of agents.
For every SD-EF assignment matrix~$P$ and every decomposition~$\pi$ of~$P$, the probability that agent~$i$ envies agent~$i'$ according to~$\pi$ is at most $\frac{n-1}{n}$.
\end{proposition}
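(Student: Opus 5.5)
Fix agents $i,i'$ and relabel objects so that $o_1\succ_i o_2\succ_i\cdots\succ_i o_n$. Write $F_i(t)=\sum_{j\le t}p_{ij}$ and $F_{i'}(t)=\sum_{j\le t}p_{i'j}$. We will bound the envy event by a union bound that uses only the assignment matrix. Then the bound holds for every decomposition $\pi$ at once.

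Agent $i$ envies $i'$ exactly when $i$ receives some $o_j$ and $i'$ receives some $o_{j'}$ with $j'<j$. For each $j\ge 2$, the event ``$i$ gets $o_j$ and $i'$ gets something better than $o_j$'' lies inside ``$i$ gets $o_j$'', and also inside ``$i'$ gets an object in $\{o_1,\dots,o_{j-1}\}$''. Taking the larger of the two would be too weak. Instead, I would use the single observation that envy implies $i$ does not get $o_1$. This gives
\[
\Pr[\text{$i$ envies $i'$}]\le 1-p_{i1}.
\]
SD-EF gives $p_{i1}\ge p_{i''1}$ for every agent $i''$. Since column $1$ sums to $1$, we get $p_{i1}\ge 1/n$, and hence the bound $1-1/n=\frac{n-1}{n}$.

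As a sanity check, consider the cyclic decomposition in the introduction with $n$ agents. There agent $2$ envies agent $1$ whenever agent $2$ does not get $a$, which happens with probability $(n-1)/n$. So the bound is attained and no refinement is needed.

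The only step needing care is the inequality $p_{i1}\ge 1/n$. It uses just the first SD inequality, applied for all $i''$ against $i$ and summed over the column: $np_{i1}\ge\sum_{i''}p_{i''1}=1$. Agent $i'$ plays no further role. There is no real obstacle; the main point is to resist a finer union bound over $j$, since the crude event ``$i$ does not receive her top object'' already gives the tight constant.
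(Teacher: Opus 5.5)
Your proof is correct and is essentially the paper's argument: $i$ can envy someone only if she misses her top object, and SD-EF applied to the first column gives $p_{i1}\ge 1/n$, so the envy probability is at most $1-\frac{1}{n}$ under every decomposition. Your tightness check via the cyclic decomposition also matches the remark the paper makes right after the proposition.
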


\begin{proof}
Since $P$ satisfies SD-EF, $P_i$ stochastically dominates $P_{i''}$ with respect to $\succ_i$ for every $i''\neq i$.
In particular, the probability that agent~$i$ receives her most preferred object under~$P$ must be at least as high as the probability that agent~$i$ receives this object under~$P_{i''}$ for every $i''\neq i$. 
Since there are $n$~agents in total, this means that the probability that agent~$i$ receives her most preferred object under~$P$ is at least $1/n$.
Hence, the same is true for the decomposition~$\pi$.
Whenever agent~$i$ receives her most preferred object, she does not envy any other agent.
Therefore, the probability that agent~$i$ receives her most preferred object is at most the probability that she does not envy any other agent; in other words, the probability that agent~$i$ envies some other agent is at most the probability that she does not receive her most preferred object.
It follows that the probability that agent~$i$ envies agent~$i'$ according to~$\pi$ is at most $1 - \frac{1}{n} = \frac{n-1}{n}$.
\end{proof}

The bound $\frac{n-1}{n}$ in \Cref{prop:SD-EF-upper} is tight: for the instance in \Cref{ex:cyclic} (generalized to $n$~agents) and the uniform assignment matrix, under the cyclic decomposition, agent~$2$ envies agent~$1$ with probability exactly $\frac{n-1}{n}$.
However, for the assignment matrix produced by PS, if we choose the decomposition more carefully, we can slightly improve the envy bound.
Recall that PS works by letting agents ``eat'' their most preferred available object simultaneously at the same speed---the fraction of an object that an agent has eaten is interpreted as the probability with which the agent receives that object in the resulting assignment matrix.

\begin{proposition}
Let $n\ge 3$ be a positive integer, and consider any instance with $n$~agents.    
For the assignment matrix output by PS, there exists a decomposition~$\pi$ such that for every pair of agents $i,i'\in N$, the probability that agent~$i$ envies agent~$i'$ according to $\pi$ is at most $\frac{n-2}{n-1}$.
\end{proposition}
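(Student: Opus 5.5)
The plan is to split according to how crowded agent~$i$'s top object is at the start of PS, and to choose the decomposition only in the hard case.

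Fix a pair $i,i'$ and let $o$ be $i$'s most preferred object. In PS, $o$ is eaten from time~$0$ by the $c$ agents whose top object is~$o$. So $p_{io}=1/c$.

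\textbf{Easy case $c\le n-1$.} The argument of \Cref{prop:SD-EF-upper} applies verbatim, for every decomposition. Agent~$i$ gets $o$ with probability at least $1/(n-1)$, and whenever she gets $o$ she envies nobody. Hence $i$ envies $i'$ with probability at most $\frac{n-2}{n-1}$.

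\textbf{Hard case $c=n$.} Here all agents share the top object~$o$ and each gets it with probability exactly $1/n$. Agent~$i$ does not envy~$i'$ in two disjoint situations:
\begin{itemize}
\item she receives $o$, which has probability $1/n$;
\item she receives her second choice $o_2$ while $i'$ does not receive $o$.
\end{itemize}
From time $1/n$ on, $i$ eats $o_2$ until it is exhausted. Since at most $n$ agents eat at the same speed and nobody started on $o_2$ before $i$, agent~$i$ obtains $p_{io_2}\ge 1/n$. It therefore suffices to find a decomposition in which, for every ordered pair $(i,i')$,
\[
\Pr[\,i \text{ gets } o_2(i) \text{ and } i' \text{ gets } o\,]\le p_{io_2}-\Bigl(\tfrac{1}{n-1}-\tfrac1n\Bigr).
\]
Since $p_{io_2}\ge 1/n$, it is enough to have $\Pr[\,i \text{ gets } o_2(i) \text{ and } i' \text{ gets } o\,]\le \frac{p_{io_2}}{n-1}-\frac{1}{n(n-1)}+\dots$; the precise target is what the construction below should deliver.

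\textbf{Decomposition for the hard case.} The natural choice makes the holder of~$o$ independent of the rest:
\begin{enumerate}
\item choose the agent $k$ who receives $o$ uniformly at random;
\item conditional on $k$, assign the remaining $n-1$ objects to the other agents using a decomposition of an $(n-1)\times(n-1)$ bistochastic matrix $B^{(k)}$, where the matrices satisfy $\frac1n\sum_{k}B^{(k)}=R$ entrywise (with zero rows inserted at $k$).
\end{enumerate}
Here $R$ is $P$ with column $o$ deleted; each of its rows sums to $(n-1)/n$. If the $B^{(k)}$ can be chosen so that $B^{(k)}_{i,o_2}\approx \frac{n}{n-1}p_{io_2}$ uniformly in~$k$ (``conditioning on who got $o$ does not reduce $i$'s chance of $o_2$''), then
\[
\Pr[\,i \text{ gets } o_2 \text{ and } i' \text{ gets } o\,]\approx \frac{p_{io_2}}{n-1}.
\]
The no-envy probability then becomes at least
\[
\frac1n+\frac{n-2}{n-1}\,p_{io_2}\ \ge\ \frac1n+\frac{n-2}{n(n-1)}=\frac{2n-3}{n(n-1)}.
\]
This is short of $\frac1{n-1}$ in general, so I expect to also use the third phase of PS, or that $p_{io_2}$ is larger when many agents share~$o_2$.

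The main obstacle is the existence of the $B^{(k)}$. Deleting row~$k$ of $R$ leaves column sums $1-r_{kj}$ rather than a common value. So one must redistribute row $k$'s mass over the other rows, e.g.\ $B^{(k)}=\frac{n}{n-1}\bigl(R_{-k}+\text{correction}\bigr)$, and then check nonnegativity and the $o_2$ entries. I would first try the symmetric correction $B^{(k)}_{i\cdot}=\frac{n}{n-1}R_{i\cdot}-\frac{1}{n-1}\cdot\frac{n}{n-1}R_{k\cdot}+\dots$, adjusted via the eating structure of PS. If that fails, I would fall back on a direct case analysis on how many agents share~$o_2$.
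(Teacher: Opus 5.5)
\textbf{Minor issue in your easy case.} The conclusion is correct and matches the paper, but the justification $p_{io}=1/c$ is false. Agents whose own top object runs out first can move on to $o$ and speed up its consumption. For example, take $n=5$, with agents $1,2$ ranking $a$ first and agents $3,4,5$ ranking $b\succ a$; PS gives $p_{1a}=2/5$, not $1/2$. What holds, and suffices, is $p_{io}\ge\frac{1}{n-1}$. Suppose $o$ were exhausted at some time $T<\frac{1}{n-1}$, and let $k$ be an agent whose top object is $o'\neq o$. While $i$ is on $o$, at most $n-1$ agents eat $o'$, so $o'$ loses at most $(n-1)T<1$ units by time $T$ and $k$ is still eating it on $[0,T]$. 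Then at most $n-1$ agents eat $o$ on $[0,T]$, so $o$ cannot be exhausted by $T$, a contradiction.

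\textbf{The genuine gap is the case $c=n$, which you leave unresolved.} The matrices $B^{(k)}$ are never constructed. Moreover, your heuristic that conditioning on the holder of $o$ leaves every other agent's chance of $o_2$ essentially unchanged cannot hold in general. Take $n=4$, with all agents ranking $a\succ b$ first, agents $1,2$ then ranking $c\succ d$, and agents $3,4$ then ranking $d\succ c$. PS gives $c$ only to agents $1,2$, so in \emph{every} decomposition agent~$2$ holds $c$, hence not $b$, whenever agent~$1$ holds $a$. You would need this conditional probability to be about $1/3$. (Incidentally, your bound $\frac{2n-3}{n(n-1)}$ is not short of $\frac{1}{n-1}$; it exceeds it by $\frac{n-3}{n(n-1)}\ge 0$. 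The real problem is that the construction behind it is unsupported.)

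\textbf{The missing idea} is that no special decomposition is needed unless all preferences coincide. Let $j$ be the first rank at which the agents disagree, and let $o_1,\dots,o_{j-1}$ be the common top objects. These are eaten jointly, one after another, so every agent receives each of them with probability exactly $\frac1n$. By the argument above, agent~$i$ receives her $j$-th object with probability at least $\frac{1}{n-1}$. Hence, with probability at least $\frac{1}{n-1}$, agent $i$ receives one of $o_1,\dots,o_j$ while $i'$ receives none of $o_1,\dots,o_{j-1}$. On that event $i$ does not envy $i'$, and this holds for \emph{every} decomposition. Your fallback on whether everyone shares $o_2$ is exactly the case $j=2$ of this argument, and it must be iterated to reach larger $j$. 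The only remaining case is identical preferences. There PS gives the uniform matrix, and the uniform decomposition yields envy probability $\frac12\le\frac{n-2}{n-1}$ for $n\ge 3$.
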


\begin{proof}
If all agents have the same preference, PS produces the uniform assignment matrix.
Under the uniform decomposition of this matrix (where each of the $n!$ deterministic assignments receives probability $\frac{1}{n!}$), for every pair of agents $i,i'\in N$, agent~$i$ envies agent~$i'$ with probability $\frac{1}{2} \le \frac{n-2}{n-1}$.

Assume now that not all agents have the same preference, and fix a pair of agents $i,i'\in N$.
There must exist $j\in\{1,\dots,n-1\}$ such that all agents agree on the $j'$-th most preferred object for every $1\le j' < j$, but not all of them agree on the $j$-th most preferred object.\footnote{If $j = 1$, then not all agents agree on their most preferred object.}
Let $o_1,\dots,o_{j-1}$ denote the $j-1$ most preferred objects of all agents in this order, and let $o_j$ denote the $j$-th most preferred object of agent~$i$.
According to the assignment matrix output by PS, each agent receives each of $o_1,\dots,o_{j-1}$ with probability $\frac{1}{n}$; moreover, agent~$i$ receives $o_j$ with probability at least $\frac{1}{n-1}$.
Hence, the same is true for any decomposition of the assignment matrix.
When agent~$i$ receives one of $o_1,\dots,o_j$ and agent~$i'$ does not receive one of $o_1,\dots,o_{j-1}$, agent~$i$ will not envy agent~$i'$.
Since the probability that agent~$i$ receives one of $o_1,\dots,o_j$ exceeds the probability that agent~$i'$ receives one of $o_1,\dots,o_{j-1}$ by at least $\frac{1}{n-1}$, we have that with probability at least $\frac{1}{n-1}$, agent~$i$ does not envy agent~$i'$.
Therefore, the probability that agent~$i$ envies agent~$i'$ is at most $1 - \frac{1}{n-1} = \frac{n-2}{n-1}$.
\end{proof}

In RP, the permutation of agents is chosen uniformly at random.
More generally, observe from the proof of \Cref{prop:RP} that if a random assignment can be implemented by a ``reversal-symmetric'' probability distribution over permutations of agents, it satisfies Dec-EF.
Here, a distribution is called \emph{reversal-symmetric} if the probability of any permutation of agents is equal to that of the reverse permutation.
Therefore, a natural strategy to prove that PS satisfies EF-decomposability would be to demonstrate that its outcome can always be implemented by a reversal-symmetric distribution over serial dictatorships.
However, the following example illustrates that this strategy fails: there exists an assignment matrix produced by PS that cannot be implemented by any reversal-symmetric distribution (but nevertheless admits a Dec-EF decomposition).

\begin{example}\label{ex:symmetric}
Consider an instance with $n = 4$ agents whose preferences are 
$a\succ_1 b\succ_1 c\succ_1 d$,\, 
$a\succ_2 b\succ_2 c\succ_2 d$,\, 
$a\succ_3 c\succ_3 b\succ_3 d$, and 
$c\succ_4 a\succ_4 b\succ_4 d$.
Let $P$ be the assignment matrix produced by PS, i.e.,
\[ P = 
\begin{bmatrix}
1/3 & 5/12 & 0   & 1/4\\
1/3 & 5/12 & 0   & 1/4\\
1/3 & 1/12 & 1/3 & 1/4\\
0   & 1/12 & 2/3 & 1/4
\end{bmatrix},
\]
where the rows correspond to agents $1,2,3,4$ and the columns correspond to objects $a,b,c,d$, respectively.  
The matrix $P$ is EF-decomposable---for example, the decomposition
\begin{align}
\pi=\frac{1}{12}\cdot\begin{bmatrix}1 & 0 & 0 & 0\\0 & 1 & 0 & 0\\0 & 0 & 1 & 0\\0 & 0 & 0 & 1\end{bmatrix}+
\frac{1}{12}\cdot\begin{bmatrix}1 & 0 & 0 & 0\\0 & 0 & 0 & 1\\0 & 1 & 0 & 0\\0 & 0 & 1 & 0\end{bmatrix}+
\frac{1}{12}\cdot\begin{bmatrix}0 & 1 & 0 & 0\\1 & 0 & 0 & 0\\0 & 0 & 0 & 1\\0 & 0 & 1 & 0\end{bmatrix}+
\frac{1}{12}\cdot\begin{bmatrix}0 & 0 & 0 & 1\\1 & 0 & 0 & 0\\0 & 0 & 1 & 0\\0 & 1 & 0 & 0\end{bmatrix}\\
+\frac{1}{6}\cdot\begin{bmatrix}1 & 0 & 0 & 0\\0 & 1 & 0 & 0\\0 & 0 & 0 & 1\\0 & 0 & 1 & 0\end{bmatrix}+
\frac{1}{6}\cdot\begin{bmatrix}0 & 1 & 0 & 0\\1 & 0 & 0 & 0\\0 & 0 & 1 & 0\\0 & 0 & 0 & 1\end{bmatrix}+
\frac{1}{6}\cdot\begin{bmatrix}0 & 1 & 0 & 0\\0 & 0 & 0 & 1\\1 & 0 & 0 & 0\\0 & 0 & 1 & 0\end{bmatrix}+
\frac{1}{6}\cdot\begin{bmatrix}0 & 0 & 0 & 1\\0 & 1 & 0 & 0\\1 & 0 & 0 & 0\\0 & 0 & 1 & 0\end{bmatrix}
\end{align}
satisfies Dec-EF.

Next, we show that despite satisfying Dec-EF, the assignment matrix $P$ cannot be implemented by any reversal-symmetric distribution over permutations of agents.
Table~\ref{tab:symmetric} lists the serial dictatorship outcomes for all permutations grouped by reversal-symmetry.

\begin{table}[t]
  \centering
  \caption{Resulting deterministic assignment for each permutation of agents in \Cref{ex:symmetric}}
  \label{tab:symmetric}
  \begin{tabular}{c|c|c}
    \toprule
    Permutation & Assignment                     & Probability\\
    \midrule
    $1, 2, 3, 4$  & $(1{:}a, 2{:}b, 3{:}c, 4{:}d)$ & $p_1$      \\
    $4, 3, 2, 1$  & $(1{:}d, 2{:}b, 3{:}a, 4{:}c)$ & $p_1$      \\
    \hline
    $2, 1, 3, 4$  & $(1{:}b, 2{:}a, 3{:}c, 4{:}d)$ & $p_2$      \\
    $4, 3, 1, 2$  & $(1{:}b, 2{:}d, 3{:}a, 4{:}c)$ & $p_2$      \\
    \hline
    $1, 2, 4, 3$  & $(1{:}a, 2{:}b, 3{:}d, 4{:}c)$ & $p_3$      \\
    $3, 4, 2, 1$  & $(1{:}d, 2{:}b, 3{:}a, 4{:}c)$ & $p_3$      \\
    \hline
    $2, 1, 4, 3$  & $(1{:}b, 2{:}a, 3{:}d, 4{:}c)$ & $p_4$      \\
    $3, 4, 1, 2$  & $(1{:}b, 2{:}d, 3{:}a, 4{:}c)$ & $p_4$      \\
    \hline
    $1, 3, 2, 4$  & $(1{:}a, 2{:}b, 3{:}c, 4{:}d)$ & $p_5$      \\
    $4, 2, 3, 1$  & $(1{:}d, 2{:}a, 3{:}b, 4{:}c)$ & $p_5$      \\
    \hline
    $2, 3, 1, 4$  & $(1{:}b, 2{:}a, 3{:}c, 4{:}d)$ & $p_6$      \\
    $4, 1, 3, 2$  & $(1{:}a, 2{:}d, 3{:}b, 4{:}c)$ & $p_6$      \\
    \hline    
    $1, 3, 4, 2$  & $(1{:}a, 2{:}d, 3{:}c, 4{:}b)$ & $p_7$      \\
    $2, 4, 3, 1$  & $(1{:}d, 2{:}a, 3{:}b, 4{:}c)$ & $p_7$      \\
    \hline
    $2, 3, 4, 1$  & $(1{:}d, 2{:}a, 3{:}c, 4{:}b)$ & $p_8$      \\
    $1, 4, 3, 2$  & $(1{:}a, 2{:}d, 3{:}b, 4{:}c)$ & $p_8$      \\
    \hline
    $1, 4, 2, 3$  & $(1{:}a, 2{:}b, 3{:}d, 4{:}c)$ & $p_9$      \\
    $3, 2, 4, 1$  & $(1{:}d, 2{:}b, 3{:}a, 4{:}c)$ & $p_9$      \\
    \hline
    $2, 4, 1, 3$  & $(1{:}b, 2{:}a, 3{:}d, 4{:}c)$ & $p_{10}$   \\
    $3, 1, 4, 2$  & $(1{:}b, 2{:}d, 3{:}a, 4{:}c)$ & $p_{10}$   \\
    \hline
    $3, 1, 2, 4$  & $(1{:}b, 2{:}c, 3{:}a, 4{:}d)$ & $p_{11}$   \\
    $4, 2, 1, 3$  & $(1{:}b, 2{:}a, 3{:}d, 4{:}c)$ & $p_{11}$   \\
    \hline
    $3, 2, 1, 4$  & $(1{:}c, 2{:}b, 3{:}a, 4{:}d)$ & $p_{12}$   \\
    $4, 1, 2, 3$  & $(1{:}a, 2{:}b, 3{:}d, 4{:}c)$ & $p_{12}$   \\
    \bottomrule
  \end{tabular}
\end{table}

For $i\in\{1,2,\dots,12\}$, let $p_i$ denote the probability assigned to each permutation in the $i$-th group. For $P$ to be implementable by a reversal-symmetric distribution, the following conditions must hold:
\begin{itemize}
    \item[(i)] Since the total probability is $1$, we have $\sum_{i=1}^{12}2p_i=1$.
    \item[(ii)] Since agent $3$ receives $a$ with probability $1/3$, we have $p_1+p_2+p_3+p_4+p_9+p_{10}+p_{11}+p_{12}=1/3$.
    \item[(iii)] Since agent $3$ receives $b$ with probability $1/12$, we have $p_5+p_6+p_7+p_8=1/12$.
\end{itemize}
By (ii) and (iii) we obtain $\sum_{i=1}^{12}p_i=1/3+1/12=5/12$, whereas by (i) we have $\sum_{i=1}^{12}p_i=1/2$, a contradiction.
Hence, the assignment matrix $P$ cannot be implemented by any reversal-symmetric distribution over permutations.
\end{example}

\section{Relations to Other Notions}

We end the paper by investigating the relations between Dec-EF and some existing notions in the random assignment literature.

An assignment matrix $P$ is said to satisfy \emph{equal treatment of equals} if $P_i = P_{i'}$ for any agents $i,i'\in N$ such that ${\succ_i} = {\succ_{i'}}$.
\Cref{ex:weak-SD-EF-Dec-EF} shows that an assignment matrix that satisfies equal treatment of equals is not necessarily EF-decomposable.
The following example demonstrates that the converse also fails to hold: EF-decomposability does not imply equal treatment of equals.

\begin{example}
Consider an instance with $n = 3$ agents whose preferences are $a\succ_1 b\succ_1 c$,\, $a\succ_2 b\succ_2 c$, and $c\succ_3 a\succ_3 b$.
Let $P$ be the assignment matrix
\[ P = 
\begin{bmatrix}
0.4 & 0.4 & 0.2 \\
0.5 & 0.4 & 0.1 \\
0.1 & 0.2 & 0.7
\end{bmatrix},
\]
where the rows correspond to agents $1,2,3$ and the columns correspond to objects $a,b,c$, respectively.    
The matrix $P$ is EF-decomposable---for example, the decomposition
\[ \pi = 0.4 \cdot 
\begin{bmatrix}
1 & 0 & 0 \\
0 & 1 & 0 \\
0 & 0 & 1
\end{bmatrix} 
+ 0.3 \cdot 
\begin{bmatrix}
0 & 1 & 0 \\
1 & 0 & 0 \\
0 & 0 & 1
\end{bmatrix}
+ 0.1 \cdot 
\begin{bmatrix}
0 & 1 & 0 \\
0 & 0 & 1 \\
1 & 0 & 0
\end{bmatrix}
+ 0.2 \cdot 
\begin{bmatrix}
0 & 0 & 1 \\
1 & 0 & 0 \\
0 & 1 & 0
\end{bmatrix}
\]
satisfies Dec-EF.
However, $P$ does not satisfy equal treatment of equals, since $P_1 \ne P_2$ even though ${\succ_1} = {\succ_2}$.
\end{example}

Next, an assignment matrix~$P$ is said to satisfy \emph{SD-efficiency} if there does not exist another assignment matrix~$P'$ such that $P'_i$ stochastically dominates $P_i$ for every $i\in N$ and $P_i \neq P'_i$ for at least one $i\in N$.
The following example shows that SD-efficiency, even when combined with weak SD-EF and equal treatment of equals, does not imply EF-decomposability. 

\begin{example}
Consider an instance with $n = 4$ agents whose preferences are $a\succ_1 b\succ_1 c\succ_1 d$ and $a\succ_i b\succ_i d\succ_i c$ for $i\in\{2,3,4\}$.
Let $P$ be the assignment matrix
\[ P = 
\begin{bmatrix}
0   & 0   & 1 & 0   \\
1/3 & 1/3 & 0 & 1/3 \\
1/3 & 1/3 & 0 & 1/3 \\
1/3 & 1/3 & 0 & 1/3 \\
\end{bmatrix},
\]
where the rows correspond to agents $1,2,3,4$ and the columns correspond to objects $a,b,c,d$, respectively.
One can verify that $P$ simultaneously satisfies SD-efficiency, weak SD-EF, and equal treatment of equals.
However, for any decomposition of $P$, agent~$1$ envies agent~$2$ with probability~$2/3$, and thus no decomposition of $P$ satisfies Dec-EF.
\end{example}

Another related notion is ex-post efficiency, which is weaker than SD-efficiency.
A deterministic assignment is said to be \emph{Pareto optimal} if no other deterministic assignment makes no agent worse off and at least one agent better off.
A random assignment~$\pi$ is called \emph{ex-post efficient} if every deterministic assignment with a positive probability under $\pi$ is Pareto optimal.
The relation between SD-efficiency and ex-post efficiency is similar to that between SD-EF and Dec-EF, in the sense that SD-efficiency and SD-EF are solely properties of an assignment matrix whereas ex-post efficiency and Dec-EF depend on the decomposition.

Since RP satisfies EF-decomposability (\Cref{prop:RP}) along with SD-strategyproofness\footnote{See the work of \citet{BogomolnaiaMo01} for the definition.}  and weak SD-EF \citep{BogomolnaiaMo01}, these properties are compatible with one another.
Also, if it holds that PS satisfies EF-decomposability, the notion would be compatible with SD-efficiency, SD-EF, and weak SD-strategyproofness.

\subsection*{Acknowledgments}

This work was partially supported by 
JST ERATO Grant Number JPMJER2301, 
by JST CRONOS Grant Number JPMJCS24K2, 
by JSPS KAKENHI Grant Numbers JP25K00137,
JP21K17708, 
and JP21H03397, 
Japan, 
by Value Exchange Engineering, a joint research project between Mercari R4D Lab and RIISE (Research Institute for an Inclusive Society through Engineering),
by the Singapore Ministry of Education under grant number MOE-T2EP20221-0001, and by an NUS Start-up Grant.  
We thank the anonymous reviewers for valuable feedback, and Ayumi Igarashi and Naoyuki Kamiyama for helpful discussions at the early stage.

\bibliographystyle{plainnat}
\bibliography{main}

\end{document}